\documentclass[letterpaper, 10 pt, conference]{ieeeconf}

\usepackage{graphicx}
\IEEEoverridecommandlockouts                 
\usepackage{amsmath,amssymb,amsfonts}
\usepackage{subfigure}
\usepackage{graphicx}
\graphicspath{{./IMG/}}
\usepackage{epstopdf}
\usepackage{cite}
\usepackage{hyperref}
\usepackage{epstopdf,color}
\usepackage{textcomp}
\usepackage{enumerate}

\usepackage{color,array}
\usepackage{comment}
\usepackage{amsmath}
\usepackage{etoolbox}
\usepackage{algorithm,algorithmic}
\usepackage{mathrsfs}
\AtBeginEnvironment{algorithm2e}{\algoequations}
\AtEndEnvironment{algorithm2e}{\restoreequations}
\newcounter{algosavedequation}
\newcommand{\algoequations}{%
  \setcounter{algosavedequation}{\value{equation}+1}%
  \setcounter{equation}{0}%
  \renewcommand{\theequation}{\arabic{algosavedequation}\alph{equation}}%\thealgocf
}
\newcommand{\restoreequations}{%
  \setcounter{equation}{\value{algosavedequation}}%
}

\allowdisplaybreaks[3]

\newcommand{\longthmtitle}[1]{\mbox{}\emph{(#1):}}

\newtheorem{problem}{Problem}

\newtheorem{remark}{Remark}

\newtheorem{proposition}{Proposition}
\newtheorem{lemma}{Lemma}

\newcommand{\real}{\mathbb{R}}

\newcommand{\naturalpos}{\mathbb{N}_{>0}}

\newcommand*{\QEDBL}{\hfill\ensuremath{\blacksquare}}% Black square
\newcommand\oprocendsymbol{\hbox{$\square$}}
\newcommand\oprocend{\relax\ifmmode\else\unskip\hfill%
\fi\oprocendsymbol}

\DeclareMathAlphabet{\mymathbb}{U}{BOONDOX-ds}{m}{n}

\newcommand{\setdef}[2]{\{#1 \; : \; #2\}}

\newcommand{\bc}{{\mathbf{c}}}
\newcommand{\bx}{{\mathbf{x}}}

\newcommand{\bu}{{\mathbf{u}}}

\newcommand{\Lc}{\mathcal{L}}

\newcommand{\Kc}{{\mathcal{K}}}
\newcommand{\Cc}{{\mathcal{C}}}

\newcommand{\norm}[1]{\Vert #1 \Vert}

\renewcommand{\baselinestretch}{.99}

\begin{document}

\title{\bf Safe Policy Optimization via  \\ Control Barrier Function-based Safety Filters
}

\author{Yiting Chen \quad Pol Mestres  \quad Emiliano Dall'Anese \quad Jorge Cort\'{e}s
\thanks{Y. Chen and E. Dall'Anese are with the Department of Electrical and Computer Engineering at Boston University; P. Mestres is with the Department of Mechanical Engineering at the California Institute of Technology; J. Cort\'{e}s is with the Department of Mechanical and Aerospace Engineering at the University of California San Diego.}
\thanks{This work was supported by the AFOSR Award FA9550-23-1-0740.}
}

\maketitle

\begin{abstract}%
Control barrier function (CBF)-based safety filters provide a systematic way to enforce state constraints, but they can significantly alter the closed-loop dynamics induced by a nominal, stabilizing controller. In particular, the resulting safety-filtered system may exhibit undesirable behaviors including limit cycles, unbounded trajectories, and undesired equilibria. This paper develops a policy optimization framework to maximally enhance the stability properties of safety-filtered controllers. Focusing on linear systems with linear nominal controllers, we jointly parameterize the nominal feedback gain and safety-filter components, and optimize them using trajectory-based objectives computed from closed-loop rollouts. To ensure that the nominal controller remains stabilizing throughout training, we encode Lyapunov-based stability conditions as smooth scalar constraints and enforce them using robust safe gradient flow. This guarantees feasibility of the stability constraints along the optimization iterates and therefore avoids instability during training. Numerical experiments on obstacle-avoidance problems show that the proposed approach can remove asymptotically stable undesired equilibria and improve convergence behavior while maintaining forward invariance of the safe set.

\end{abstract}
%
% \marginJC{Since we only deal with linear systems, let's make sure we reflect that in the abstract/contributions/etc. Reviewers get rightfully frustrated if you overpromise/overstate and then don't deliver.}
%

%\begin{keywords}%

%\end{keywords}

%%%%%%%%%%%%%%%%%%%%%%%%%%%%%%%%%%
\section{Introduction}
\label{sec:introduction}

Ensuring safety while achieving high-performance control is a fundamental requirement in modern autonomous and cyber-physical systems, including robotics, transportation, and energy systems. 
In many applications, safety is formalized as the forward invariance of a prescribed set of ``safe'' states~\cite{ADA-SC-ME-GN-KS-PT:19}. Control barrier functions (CBFs) provide a systematic framework for enforcing such safety constraints in control-affine systems~\cite{ADA-XX-JWG-PT:17, LW-ADA-ME:17}, commonly through \emph{safety filters} that, given a nominal controller, minimally modify to ensure safety.
However, the interaction between nominal controllers and CBF-based safety filters can significantly alter the properties of the resulting closed-loop dynamics. In particular, the safety filter may degrade the stability properties of the nominal controller. For instance, even when the nominal controller ensures global asymptotic stability of the origin, the safety-filtered system may exhibit undesirable behaviors such as asymptotically stable undesired equilibria, limit cycles, or unbounded trajectories~\cite{chen2024characterization, PM-YC-EDA-JC:25-jnls, MFR-APA-PT:21}. 
These observations highlight the need for systematic methods to design controllers with guarantees for both safety and stability properties. 
% This paper aims to develop a systematic approach to jointly design nominal controllers and safety filters to improve both safety and stability.

\emph{Literature Review:}
Recent works have used Lyapunov-based ideas to stabilize learned controllers, including joint controller--Lyapunov design~\cite{long2025certifying,wu2023neural}, Lyapunov learning from demonstrations~\cite{mittal2020neural}, and Lyapunov-regularized policy optimization~\cite{hejase2023lyapunov}. However, these works do not address the interaction between nominal controllers and CBF-based safety filters. In parallel, recent results have shown that safety filters can significantly alter the qualitative behavior of the closed-loop system: even when the nominal controller is asymptotically stabilizing, the filtered dynamics may exhibit undesired equilibria, limit cycles, or unbounded trajectories~\cite{chen2024characterization,PM-YC-EDA-JC:25-jnls,MFR-APA-PT:21}. Moreover, while some of these obstructions are structural, the choice of nominal controller and safety-filter design can still strongly affect their impact on the closed-loop dynamics~\cite{chen2024characterization,PM-YC-EDA-JC:25-jnls}. This motivates the development of systematic methods to optimize nominal controllers and safety parameters in order to improve the behavior of safety-filtered systems.
%
% \marginJC{"to improve both safety and stability" feels a bit generic. We don't improve safety (safety filters already give you that), no? And the lit review is only focused on the learning side (of course we need it), but it does nto say that we've identified hard obstructions that cannot be overcome with safety filters no matter what, but that also has a positive spin, which is to try to optimize the nominal controller design to minimize the effect of such obstructions. In a similar vein, in our paper in Automatica, we show that these properties remain no matter what CBF/alpha pair, but on the other hand, the choice makes a difference in making the extent to which those properties affect the dynamics (e.g., size of region of attraction of a stable undesired equilibrium).}
%

\emph{Statement of Contributions:}
This paper develops a systematic framework to improve the dynamical behavior of safety-filtered controllers through optimization of the nominal controller and safety-filter parameters. The main contributions are as follows:

 \emph{(c1)} We formulate a trajectory-based optimization problem for safety-filtered controllers in which the nominal controller, the CBF-related class-$\mathcal K_\infty$ function, and the safety-filter weighting matrix are jointly parameterized. This formulation is designed to shape the closed-loop dynamics induced by safety filtering, with the goal of reducing the effect of undesired equilibria and improving convergence to the desired equilibrium.

 \emph{(c2)} Focusing on linear systems with linear nominal controllers, we encode nominal closed-loop stability through Lyapunov-based matrix inequalities and reformulate these conditions as smooth scalar constraints using leading principal minors. This yields an optimization problem amenable to gradient-based methods while preserving an explicit characterization of stabilizing nominal controllers.

 \emph{(c3)} We develop a training procedure based on robust safe gradient flow and rollout-based gradient estimates. Proposition~\ref{prop:constraint-satisfaction-convergence} shows that, when initialized from a stabilizing nominal controller, all iterates remain feasible and therefore keep the nominal controller stabilizing throughout training. The result also quantifies how gradient approximation errors affect convergence to a neighborhood of a KKT point.

 \emph{(c4)} We illustrate the proposed framework on representative obstacle-avoidance examples, including environments with multiple constraints and complex safe-set geometry. The numerical results show that the optimized controllers improve convergence behavior and can eliminate asymptotically stable undesired equilibria while preserving forward invariance of the safe set.

\section{Preliminaries}

\subsection{Notation} 
We denote by $\real$ and $\naturalpos$ the set of real and positive integers, respectively. 
Vectors are written in boldface, while scalar quantities are not. 
Given $\bx=(x_1,...,x_n)\in\real^n$ and $p\in\mathbb{Z}_{>0}$, $\|\bx\|_p$ denotes its $\ell_p$ norm. For a square matrix $G\in\real^{m\times m}$, the $i$-th leading principal minor is defined as the determinant of $G_{(1:i,\,1:i)}$, where $G_{(1:i,\,1:i)}$ denotes the submatrix formed by the first $i$ rows and columns of $G$. If $G$ is positive definite, we define $\norm{\bu}_G := \sqrt{\bu^\top G \bu}$. Given a continuously differentiable function $h:\real^n\to\real$, $\nabla h(\bx)$ denotes its gradient at $\bx$. 
A function $\alpha:\real\to\real$ is of extended class $\mathcal{K}_\infty$ if it is strictly increasing, $\alpha(0)=0$, and $\lim_{s\to\pm\infty}\alpha(s)=\pm\infty$.
A function $f:\real^n\to\real$ is $L$-smooth for some constant $L>0$ if $\norm{\nabla f(x)-\nabla f(y)} \leq L\norm{x-y}$.

%%%%%%%%%%%%%%%%%%%%%%%%%%%%%%%%%%
\subsection{Safety filters for control-affine systems}\label{sec:prelim}

Consider the control-affine system
\begin{align}\label{eq:prelim-control-affine}
    \dot{\bx} = f(\bx) + g(\bx)\bu,
\end{align}
where $f:\real^n\to\real^n$ and $g:\real^n\to\real^{n\times m}$ are locally Lipschitz, $\bx\in\real^n$ is the state, and $\bu\in\real^m$ is the control input.

Let $h:\real^n\to\real$ be a continuously differentiable function, and define the set
\[
\Cc := \{\bx\in\real^n : h(\bx) \geq 0\}.
\]
The function $h$ is a \emph{control barrier function} (CBF) for the set $\Cc$ if there exists an extended class $\mathcal{K}_\infty$ function $\alpha$ such that, for all $\bx\in\Cc$, there exists $\bu\in\real^m$ satisfying
\begin{equation}
\label{eq: cbf-condition}
\nabla h(\bx)^\top (f(\bx) + g(\bx)\bu) + \alpha(h(\bx)) \geq 0.  
% \hfill \Box
\end{equation}

The main feature of CBFs is that they enable the design of safe controllers.
Indeed, any locally Lipschitz controller satisfying~\eqref{eq: cbf-condition} for all $\bx\in\Cc$ renders $\Cc$ forward invariant~\cite[Corollary 2]{ADA-XX-JWG-PT:17}.
%
% \marginJC{Sloppy here. We're thinking of state feedback, not open-loop inputs. ANd we need to ask for locally Lipschitzness, otherwise sols of closed-loop system might not be unique and Nagumo's theorem does not apply.}
%
A common method to design such controller is through \textit{safety filters}.
Given a nominal controller $k:\real^n\to\real^m$ that stabilizes a desired equilibrium point (taken to be the origin henceforth without loss of generality),
safety filters typically search for a controller that minimally modifies $k$ while satisfying the CBF condition~\eqref{eq: cbf-condition}.
This controller can be obtained at every state $\bx\in\Cc$ by solving the following quadratic program (QP):
\begin{align}\label{eq:prelim-safety-filter}
    v(\bx) &= \arg\min_{\bu\in\real^m} \frac{1}{2}\norm{\bu - k(\bx)}^2_{G(\bx)} \\
    &\text{s.t.} \quad \nabla h(\bx)^\top (f(\bx)+g(\bx)\bu) + \alpha(h(\bx)) \geq 0, \nonumber
\end{align}
where $G:\real^n\to\real^{m\times m}$ is continuously differentiable and positive definite for all $\bx\in\real^n$ (this function prescribes the cost function weights at every state).
%
% \marginJC{What is $G(x)$? Never introduced. You need to also assume something on the dependency of $G$ on $x$ (continuous?).}
%
Since the objective function of~\eqref{eq:prelim-safety-filter} is strictly convex, $v(\bx)$ is a singleton for each $\bx\in\Cc$. Furthermore, if Slater's condition holds for~\eqref{eq:prelim-safety-filter} for each $\bx\in\Cc$, then $v$ is locally Lipschitz~\cite{MA-NA-JC:25-tac}, and renders $\Cc$ forward invariant.
We also note that $v$ can be expressed in closed-form~\cite{PM-YC-EDA-JC:25-jnls}.
Indeed, let $\eta(\bx) = \nabla h(\bx)^\top (f(\bx)+g(\bx)k(\bx)) + \alpha(h(\bx))$. Then,
\begin{align}\label{eq:v-expression}
    v(\bx) = \begin{cases}
        k(\bx), & \text{if } \eta(\bx) \geq 0, \\
        k(\bx) + \bar{u}(\bx), & \text{if } \eta(\bx) < 0,
    \end{cases}
\end{align}
where
\[
\bar{u}(\bx) := -\frac{\eta(\bx)\, G(\bx)^{-1} g(\bx)^\top \nabla h(\bx)}
{\| g(\bx)^\top \nabla h(\bx) \|_{G(\bx)^{-1}}^2}.
\]

% \red{[ED:  (4) is incorrect. it shoudl be
% \begin{align}\label{eq:v-expression}
%     v(\bx) = \begin{cases}
%        k(\bx), & \text{if } \eta(\bx) \geq 0, \\
%         k(\bx) + \bar{u}(\bx), & \text{if } \eta(\bx) < 0,
%     \end{cases}
% \end{align}
% with $\bar{u}(\bx)$ written as above]
% }

Although the safety filter guarantees forward invariance of $\Cc$ and admits an efficient implementation, it does not, in general, preserve the stabilization properties of the nominal controller. 
In particular, even if the nominal controller $\bu = k(\bx)$ is designed so that the system $\dot \bx = f(\bx) + g(\bx)k(\bx)$ renders the origin globally asymptotically stable, the closed-loop system obtained by applying the safety filter may fail to retain this property. 
The results in~\cite{chen2024characterization,PM-YC-EDA-JC:25-jnls,MFR-APA-PT:21,tan2024undesired} show that the resulting closed-loop system $\dot \bx = f(\bx) + g(\bx)v(\bx)$ can exhibit undesirable dynamical behaviors, such as the presence of additional equilibria (whose number, location, and stability depend on the nominal controller), unbounded trajectories, or limit cycles. 
% As discussed earlier, the use of a safety filter can significantly alter the dynamical behavior of the closed-loop system.  In particular, the resulting system may exhibit additional equilibria whose number, location, and stability depend on the nominal controller. 
Furthermore, as observed in~\cite{chen2024characterization,PM-YC-EDA-JC:25-jnls},
%
% \marginJC{SHouldn't we also refer to PM-YC-EDA-JC:25-jnls here?}
%
different nominal controllers and different choices of the CBF
can lead to different behaviors when combined with a safety filter. 
In favorable cases, the closed-loop system obtained from the safety filter is also globally asymptotically stable,
or has only one unstable undesired equilibrium (this is the best dynamical behavior one can expect in cases where the safe set is not simply connected, due to topological obstructions\cite{DEK:87-icra}).
In less favorable cases, the safety filter may introduce multiple undesired equilibria, some of which may even be asymptotically stable. However, as shown in~\cite{chen2024characterization}, the choice of nominal controller or CBF can be used to reduce the region of attraction of the asymptotically stable undesired equilibria.

% Moreover, due to topological obstructions \cite{DEK:87-icra}, it is in general impossible to design a locally Lipschitz controller that renders all trajectories convergent to the desired equilibrium while preserving safety. 
% In addition, the choice of the barrier function $h$ and the function $\alpha$ can affect the size of the region of attraction of the desired equilibrium \cite{chen2024equilibria}. 

%
%\marginJC{I think we could (should!) do a better job here. "favorable" is loose. Plus, we know much more: e.g., the undesirable things that do not change qualitatively even if you consider other CBFs (with the same 0-superlevel set) or other class $K$ functions. Then you need to explain that quantitatively, things may change. Maybe a figure/plot illustrating this point for 2 different $(h_1,\alpha_1)$ and $(h_2,\alpha_2)$ pairs would be most helpful. This sets the stage for the problem statement in the next section.}
%

\section{Problem Statement}

Motivated by the considerations of Section~\ref{sec:prelim} regarding the impact of safety filters on dynamical behavior, we seek to determine a nominal controller that leads to a closed-loop system with favorable dynamical properties. By this, we mean ensuring the number of undesired equilibria is minimized, preventing the emergence of asymptotically stable undesired equilibria, and maximizing the region of attraction of the desired equilibrium. We formalize this search 
by considering a parametric family of nominal controllers, candidate CBFs (i.e., continuously differentiable functions whose $0$-superlevel set is $\Cc$), cost function weighting functions, and class $\Kc$ functions.
Specifically, we let $\theta\in\real^d$ be a parameter vector, and define
\begin{align*}
&k_\theta:\real^n\to\real^m, &  & G_\theta:\real^n\to\real^{m\times m} ,
\\
&\alpha_\theta:\real\to\real,&  & h_\theta:\real^n\to\real,
\end{align*}
where $k_{\theta}$ is locally Lipschitz,
%and globally asymptotically stable, (no, we impose it later as a constraint) 
$G_{\theta}$ is continuously differentiable and positive definite for all $\bx\in\Cc$, $\alpha_{\theta}$ is an extended class $\Kc_{\infty}$ function, and $h_{\theta}$ is continuously differentiable and its zero-superlevel set is $\Cc$.

Given $\theta$, the corresponding safety-filtered controller is
\begin{align}\label{eq:prelim-parameterized-filter}
    \bu^*_\theta(\bx)
    &= \arg\min_{\bu\in\real^m} \frac{1}{2}\norm{\bu - k_\theta(\bx)}_{G_\theta(\bx)}^2 \\
    &\text{s.t.} \quad \nabla h_{\theta}(\bx)^\top (f(\bx)+g(\bx)\bu) + \alpha_\theta(h_\theta(\bx)) \geq 0. \nonumber
\end{align}
%
% \marginJC{Here, what are the regularity properties of these maps? Also, all $h_\theta$ should have the same 0-superlevel set $C$, no? All $\alpha_\theta$ should be class BLA functions, etc., no?}
%
which induces the 
% {\color{cyan} I would add $\theta$ dependency for $h$ here as well because it links well with our Automatica paper. However, in the sims we don't need to do it.}
closed-loop system
\begin{align}\label{eq:prelim-closed-loop}
    \dot{\bx} = f(\bx) + g(\bx)\bu^*_\theta(\bx).
\end{align}

To capture the desirable dynamical objectives, we consider a trajectory-based cost function defined over a finite time horizon. Given a finite horizon $T>0$ and an initial condition $\bx_0$, let $\bx_\theta(t;\bx_0)$ denote the solution of~\eqref{eq:prelim-closed-loop} at time $t$. 
We define the cost function
\begin{align}\label{eq:prelim-cost-single}
    \mathcal{L}(\theta,\bx_0)
    = \phi(\bx_\theta(T;\bx_0) )
    + \lambda \int_0^T \psi(\bx_\theta(t;\bx_0)) dt,
\end{align}
for some $\lambda>0$. The terminal cost $\phi(\bx_\theta(T;\bx_0))$ penalizes trajectories that fail to approach the desired equilibrium over the time horizon. 
In particular, it encourages the state to be close to the origin at time $T$, thereby promoting convergence to the desired equilibrium. The running cost $\int_0^T \psi(\bx_\theta(t;\bx_0)) dt$ penalizes deviations from the desired equilibrium along the trajectory. 
As a result, it discourages trajectories that remain far from the origin for extended periods of time, and promotes faster convergence. The hyperparameter $\lambda$ balances the relative importance of the terminal and running costs.

Given a distribution $\xi$ over initial conditions, we consider the expected cost
\begin{align}\label{eq:prelim-cost}
    \mathcal{L}(\theta) = \int_{\real^n} \mathcal{L}(\theta,\bx_0)\, \xi(\bx_0)\, d\bx_0,
\end{align}
and the optimization problem
\begin{align}\label{eq:prelim-optimization}
    \min_{\theta\in\real^d} \mathcal{L}(\theta).
\end{align}
Qualitatively, problem~\eqref{eq:prelim-optimization} seeks parameters $\theta$ that lead to closed-loop trajectories that approach the desired equilibrium for initial conditions from a given distribution.
In particular, it promotes closed-loop systems with favorable dynamical properties, such as a large region of attraction of the desired equilibrium and the absence of asymptotically stable undesired equilibria.

Our goal is to design a safe controller with favorable dynamical properties by solving~\eqref{eq:prelim-optimization}.
In particular, we aim to solve the following problem.

\vspace{0.1cm}

\begin{problem}\label{problem:learning-cbf}
Given system~\eqref{eq:prelim-control-affine}, a safe set $\Cc$, and the parameterized safety filter~\eqref{eq:prelim-parameterized-filter},
design an algorithm that solves~\eqref{eq:prelim-optimization}.
Additionally, in order to ensure that the origin of the closed-loop system remains asymptotically stable (even if the algorithm is terminated after a finite number of iterations), ensure that all the algorithm iterates yield a stabilizing nominal controller.
\hfill $\Box$
\end{problem}
%
% \marginJC{This environment does not say anything about "investigate the dynamical properties of the resulting closed-loop system". Maybe more importantly, there is no constraint on having the nominal controller be stabilizing. I know we address this in the next section, but shouldn't we discuss/introduce the problem already with that. It could replace Problem 1, or maybe just discuss the shortcoming of addressing Problem 1, leading to the formulation of a Problem 2 (the one w/ stabilizing nominal controllers) which is the one that we really address in the paper.}
%

%%%%%%%%%%%%%%%%%%%%%%%%%%%%%%%%%%%%%

\section{Enforcing stability constraints}

The formulation~\eqref{eq:prelim-optimization} in Section~\ref{sec:prelim} prioritizes safety by construction, as the CBF-based filter guarantees forward invariance of the set $\Cc$. 
However, it does not guarantee that the nominal controller is stabilizing. 
As mentioned in Problem~\ref{problem:learning-cbf}, to address this our goal is to select parameters $\theta$ for which the nominal controller renders the origin asymptotically stable.

In this section we explain how to enforce such stability constraints on the nominal controller.
We focus on linear systems of the form
\begin{align}\label{eq:linear-system}
    \dot{\bx} = A\bx + B\bu,
\end{align}
($f(x) = Ax$ and $g(x) = B)$
and consider linear nominal controllers $k_\theta(\bx) = -K\bx$,
where $K\in\real^{m\times n}$ is part of the parameter vector $\theta$, i.e., $\theta = (K,\bar{\theta})$, for some other parameters $\bar{\theta}$ parameterizing the rest of the functions $G_{\bar{\theta}}$, $\alpha_{\bar{\theta}}$, $h_{\bar{\theta}}$.

% \red{[ED: there is a sign inconsistency here: You write the nominal controller as $k_\theta(\bx) = K\bx$, but then enforce stability of $A-BK$. The controller should be parametrized as $k_\theta(\bx) = - K\bx$ to be consistent.]}

We seek to enforce that the matrix $A-BK$ is Hurwitz. 
A standard approach is to introduce a Lyapunov function $V(\bx) = \bx^\top P \bx$, with $P\in\real^{n\times n}$ positive definite, and impose the condition
\begin{align}\label{eq:lyapunov-condition}
    (A-BK)^\top P + P(A-BK) \prec 0.
\end{align}

The condition~\eqref{eq:lyapunov-condition} is bilinear in $(K,P)$ and therefore non-convex. 
Using the standard change of variables $Y = KP$ and $Q = P^{-1}$, we obtain the equivalent linear condition
\begin{align}
    AQ + QA^\top - BY - Y^\top B^\top \prec 0, \quad Q \succ 0.
\end{align}

In the remainder of the paper, we set $h_\theta \equiv h$, and only optimize over the remaining parameters. For tractability, we adopt simple parametric forms for the functions $\alpha_\theta$ and $G_\theta$. Specifically, we parameterize the extended class $\mathcal{K}_\infty$ function as 
$\alpha_\theta(s) = \alpha s$, where $\alpha > 0$ is a scalar decision variable. 
This linear parameterization satisfies the $\mathcal{K}_\infty$ properties. Moreover, we note that $G_\theta$ affects $\mathcal{L}(\theta)$ only through $\bu^*_\theta$ 
(as in \eqref{eq:prelim-parameterized-filter}), and $\bu^*_\theta$ depends on $G_\theta$ 
only via its inverse (cf.~\eqref{eq:v-expression}). 
Therefore, we parameterize $G_\theta$ as $G_\theta = R^{-1}$, where $R \succ 0$. Accordingly, we consider a modified version of \eqref{eq:prelim-optimization} that incorporates the above constraints:
%
% \marginJC{THis is no longer (9), b/c we've added constraints that were not there. I think it's better if we say that we consider a modified version of (9) that incorporates stability constraints on the nominal controller.}
%
\begin{align}\label{eq:non-convex-lmi}
    &\min_{\theta = (Y,Q,R,\alpha)\in\real^d} \mathcal{L}(\theta) \\
    &\text{s.t.} \quad AQ + QA^\top - BY - Y^\top B^\top \prec 0, \nonumber\\
    &\phantom{\text{s.t.}} \quad Q \succ 0, \nonumber\\
     &\phantom{\text{s.t.}} \quad R \succ 0, \nonumber\\
    &\phantom{\text{s.t.}} \quad \alpha > 0, \nonumber
\end{align}
% \marginED{ED: It is unclear to me how this problem is formulated. In particular, $\alpha$ appears as an independent variable in an independent constraint. There should be the dynamics as a constraint as well as the filter, right?}
%
% \marginJC{Should we add 1 more constraint to make it explicit? Something like "under the closed-loop dynamics~\eqref{eq:prelim-closed-loop}"?}
%
where recall that the cost function is computed using the closed-loop dynamics~\eqref{eq:prelim-closed-loop}. In fact, the optimization variables $\theta = (Y,Q,R,\alpha)$ affect $\mathcal{L}(\theta)$ through the following chain of dependencies: 
the matrices $Y$ and $Q$ determine the nominal controller $k_\theta$ through the relation $K = YQ^{-1}$, while the scalar $\alpha$ and the matrix $R$ determine the parameterized functions $\alpha_\theta(s)=\alpha s$ and $G_\theta = R^{-1}$, respectively. 
These quantities jointly determine the filtered control input $\bu_\theta^*(\bx)$ through \eqref{eq:prelim-parameterized-filter}. 
The control input $\bu_\theta^*(\bx)$ in turn defines the closed-loop dynamics~\eqref{eq:prelim-closed-loop}, which determine the single-trajectory cost in \eqref{eq:prelim-cost-single}.  Finally, the expected loss $\mathcal{L}(\theta)$ is obtained by averaging this cost over the trajectories induced by the resulting closed-loop system.

The first three constraints in~\eqref{eq:non-convex-lmi} are linear matrix inequalities (LMI). 
However, directly handling this constraint within a gradient-based learning framework is challenging, as it requires projecting onto the cone of positive definite matrices or solving a semidefinite program at each iteration. To obtain a representation that is more amenable to gradient-based optimization, 
we seek to replace the matrix inequality constraints with equivalent scalar algebraic conditions. To this end, we exploit a classical equivalent characterization of positive definiteness based on principal minors.

\vspace{0.1cm}
\begin{lemma}[{\cite[Theorem 7.2.5]{horn2012matrix}}]\label{lem:principal minors}
Let $ M\in \real^{n\times n}$ be a symmetric matrix. Then $M$ is positive definite if and only if all leading principal minors of $M$ are positive, i.e., $\det(M_{(1:i,\,1:i)}) > 0$, for all $i\in[n]$. \hfill$\Box$
\end{lemma}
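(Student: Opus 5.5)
This is Sylvester's criterion, and I would prove the two directions separately, using induction on $n$ for the harder one.

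\emph{Necessity.} Suppose $M\succ 0$. Fix $i\in[n]$ and consider vectors $\bx\in\real^n$ supported on the first $i$ coordinates. Then $\bx^\top M\bx = \mathbf{y}^\top M_{(1:i,\,1:i)}\mathbf{y}$, where $\mathbf{y}\in\real^i$ collects the first $i$ entries of $\bx$. Hence each leading submatrix $M_{(1:i,\,1:i)}$ is itself symmetric positive definite. Its eigenvalues are therefore real and positive, and its determinant, being their product, is positive.

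\emph{Sufficiency.} I would induct on $n$. For $n=1$ the claim is immediate, since $M=m_{11}>0$. For the inductive step, write
\[
M=\begin{bmatrix} M_{n-1} & \mathbf{b}\\ \mathbf{b}^\top & c\end{bmatrix},
\]
where $M_{n-1}=M_{(1:n-1,\,1:n-1)}$. The leading principal minors of $M_{n-1}$ are exactly the first $n-1$ leading minors of $M$, all positive, so the induction hypothesis gives $M_{n-1}\succ 0$. In particular $M_{n-1}$ is invertible.

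Using the block congruence $M = L^\top\,\mathrm{diag}(M_{n-1},\,s)\,L$ with
\[
L=\begin{bmatrix} I & M_{n-1}^{-1}\mathbf{b}\\ 0 & 1\end{bmatrix},
\]
where $s=c-\mathbf{b}^\top M_{n-1}^{-1}\mathbf{b}$ is the Schur complement, we obtain $\det M=\det(M_{n-1})\,s$. Both $\det M$ and $\det M_{n-1}$ are positive, so $s>0$. Since $L$ is invertible, $M$ is congruent to the block diagonal matrix $\mathrm{diag}(M_{n-1},s)$, which is positive definite. A congruence by an invertible matrix preserves positive definiteness, so $M\succ 0$.

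The main step is the Schur-complement factorization, which lets the single new sign condition $\det M>0$ pass to $s>0$. It can be verified by direct multiplication of the block matrices. Everything else is routine, and the statement can equally be taken as cited from \cite{horn2012matrix}.
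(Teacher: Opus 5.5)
Your proof is correct. The necessity direction restricts the quadratic form to the first $i$ coordinates. The sufficiency induction uses the congruence $M=L^\top\mathrm{diag}(M_{n-1},s)L$ with $\det L=1$, which gives $s=\det M/\det M_{n-1}>0$. Both directions are complete.

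The paper gives no argument of its own: it simply imports Sylvester's criterion from \cite{horn2012matrix}. Your write-up is therefore a self-contained substitute for the citation rather than a competing route. For comparison, the induction commonly given in textbooks replaces your Schur-complement step with Cauchy eigenvalue interlacing. If $M_{n-1}\succ 0$, interlacing forces the $n-1$ largest eigenvalues of $M$ to be positive, and $\det M>0$ then rules out the smallest one being nonpositive.

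The two routes trade off as follows. The interlacing version is shorter once Courant--Fischer is available. Yours needs only multiplicativity of the determinant and invariance of definiteness under congruence. It also exhibits, as a by-product, the pivots $\det M_{(1:k,\,1:k)}/\det M_{(1:k-1,\,1:k-1)}$ of the $LDL^\top$ (Cholesky) factorization of $M$.
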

\vspace{0.1cm}

By Lemma \ref{lem:principal minors}, we can equivalently replace each LMI in \eqref{eq:non-convex-lmi} with $n$ inequalities based on leading principal minors. Specifically, we write
\[
\Delta(Q,Y) := -AQ - QA^\top + BY + Y^\top B^\top,
\]
and then \eqref{eq:non-convex-lmi} can be equivalently converted to
\begin{align}\label{eq:non-convex-reformulation}
    &\min_{\theta = (Y,Q,R,\alpha)\in\real^d} \mathcal{L}(\theta) \\
    &\text{s.t.} \quad \det(\Delta_{(1:i,\,1:i)}(Q,Y)) > 0, \quad i\in[n], \nonumber\\
     &\phantom{\text{s.t.}} \quad\det(Q_{(1:i,\,1:i)}) > 0, \quad i\in[n],\nonumber\\
     &\phantom{\text{s.t.}} \quad\det(R_{(1:i,\,1:i)}) > 0, \quad i\in[m],\nonumber\\
     &\phantom{\text{s.t.}} \quad \alpha>0 \nonumber,
\end{align}
where we recall that $\Delta_{(1:i,\,1:i)}(Q,Y)$ is the submatrix formed by the first $i$ rows and columns of $\Delta(Q,Y)$. Similarly, $Q_{(1:i,\,1:i)}$ and $R_{(1:i,\,1:i)}$ denote the leading principal submatrices of $Q$ and $R$, respectively. In this form, both the objective function and the constraints are non-convex, but the constraints are expressed in a standard form that can be handled by gradient-based methods.

A critical challenge in solving~\eqref{eq:non-convex-reformulation} 
% standard constrained optimization methods typically guarantee feasibility only asymptotically.
is that, violating the constraints may lead to ill-defined or unbounded values of the objective function $\Lc(\theta)$, as the underlying cost becomes arbitrarily large or ill-conditioned when the closed-loop system loses stability. 
As a result, temporary infeasibility during the optimization process can severely disrupt the algorithm and make it difficult to recover. Therefore, it is critical to employ an optimization scheme that maintains the feasibility of the constraints at every iteration, specifically one that does not leave the set of nominal stabilizing controllers.
This also ensures that even if the algorithm is terminated after a finite number of iterations, the resulting parameter is feasible.
This requirement motivates the algorithmic approach adopted in the next section.

%
% \marginJC{Before we get to the algorithm implementation in the next section, something that does not come across very clearly from the exposition of this section is the problem one would face when solving (13) or (14) with any optimization method that only guarantees convergence to optimizer asymptotically: typically, that means satisfaction of the constraints in the limit. However, b/c of the nature of problem (13) and (14), stepping outside the feasible set would lead to infinity values in the cost function, messing up the algorithm and leaving no clear option for how to restart. Hence, it is important to remain feasible throughout the algorithm evolution, no? Hence our algorithmic implementation.}
% %

% For completeness, we briefly comment on an alternative approach to enforcing stability.

\vspace{0.1cm}

\begin{remark}\longthmtitle{Alternative approach to enforcing stability}
An alternative approach to enforcing stability is to require that the spectral abscissa of $A - BK$ is negative. 
However, this results in a non-smooth constraint, which makes it difficult to handle within gradient-based optimization frameworks. \oprocend
\end{remark}
% Finally, we note that one could alternatively attempt to enforce stability by requiring that the spectral abscissa of $A-BK$ is negative.  However, this leads to a non-smooth constraint, which makes it difficult to use gradient-based optimization methods.
%
% \marginJC{This last observation could be in a remark?}
%

\section{Algorithmic Implementation}

In this section, we propose an algorithm to solve~\eqref{eq:non-convex-reformulation}.
We start by rewriting~\eqref{eq:non-convex-reformulation} in the standard constrained optimization form
\begin{subequations}
\begin{align}
    &\min\limits_{\theta\in\real^d} \Lc(\theta) \\
    &\text{s.t.} \ C_i(\theta) \leq 0, \ i\in[q],
\end{align}
\label{eq:constrained-optimization-problem}
\end{subequations}
where $q = 2n+m+1$ and
$C_i(\theta) = -\text{det}(\Delta_{(1:i,\,1:i)}(Q,Y))+\epsilon_0$ for $i\in[n]$, 
$C_i(\theta) = -\text{det}(Q_{(1:i,\,1:i)})+\epsilon_0$ for $i\in\{n+1,\hdots,2n\}$, $C_i(\theta) = -\text{det}(R_{(1:i,\,1:i)} )+\epsilon_0$ for $i\in\{2n+1,\hdots,2n+m\}$ and $C_{2n+m+1}(\theta) = -\alpha_{\theta} + \epsilon_0$. Here, $\epsilon_0>0$ is a small tolerance.

We note that~\eqref{eq:constrained-optimization-problem} is in general a non-convex problem. 
We propose to solve it through the \emph{robust safe gradient flow} (RSGF) method, 
which we adopt from~\cite{PM-AM-JC:25-l4dc,AA-RS-MT:10,AA-JC:24-tac}. Specifically, given $\delta, \gamma >0$,
RSGF generates iterates $\{\theta_k\}_{k\in\mathbb{N}}$ as follows:
%
%\marginJC{$1/2h$ is $\frac{1}{2\delta}$?}
%
\begin{align}\label{eq:rsgf}
\notag
    &\theta_{k+1} = p(\theta_k) \\
     &p(\theta):= \text{arg}\min\limits_{y\in\real^d} \nabla \Lc(\theta)^\top (y-\theta) + \frac{1}{2\delta}\norm{y-\theta}^2 \\
    \notag
    &\text{s.t.} \ \gamma \delta C_i(\theta) + \nabla C_i(\theta)^\top (y-\theta) + \frac{1}{2\delta}\norm{y-\theta}^2 \leq 0, \ i\in[q].
\end{align}
%
% \marginJC{I think we should use the name "robust safe gradient flow" and say that we borrow it here. }
%

As shown in~\cite[Lemma 1]{PM-AM-JC:25-l4dc},
under a suitable choice of the parameters $\delta, \gamma$,
the iterates defined by~\eqref{eq:rsgf} converge to a KKT point of~\eqref{eq:constrained-optimization-problem}.
% which are exactly the points $\theta$ satisfying $p(\theta) = \theta$.
An additional benefit of Algorithm~\ref{eq:rsgf} compared to other constrained optimization algorithms such as interior-point~\cite{JN-SW:06} or primal-dual methods~\cite{DF-FP:10} is that the iterates of~\eqref{eq:rsgf} are guaranteed to satisfy the constraints of~\eqref{eq:constrained-optimization-problem} (cf.~\cite[Lemma 1]{PM-AM-JC:25-l4dc}).
This is particularly relevant in the context of problem~\eqref{eq:non-convex-reformulation}, where this implies that the algorithm iterates generate stabilizing nominal controllers and positive slopes of the class-$\mathcal K_\infty$ function.
% even if the algorithm is terminated after a finite number of iterations.

\subsection{Gradient approximation and implementation}

Unfortunately, in the setting considered here, the function $\Lc(\theta)$ and its gradient $\nabla \Lc(\theta)$ are difficult to obtain explicitly, since they involve an integral over all possible initial conditions. Instead, in practice we need to approximate $\nabla \Lc(\theta)$ by sampling a few initial conditions, which produces an estimate $\widehat{\nabla \Lc}(\theta)$ of the gradient.
% The cost function $\mathcal{L}(\theta)$ defined in~\eqref{eq:prelim-cost} involves an expectation over initial conditions and depends on the trajectories of the closed-loop system. Since this quantity cannot be evaluated in closed form, we approximate it using Monte Carlo simulation of the closed-loop dynamics. 
Specifically, let $\bx_0^{(i)}$, $i\in\{1,\dots,N\}$, be sampled from the distribution $\xi$.  For each initial condition, we simulate the closed-loop system~\eqref{eq:prelim-closed-loop} under the parameterized safety filter~\eqref{eq:prelim-parameterized-filter} over a finite horizon $T$ using a time discretization with step size $\Delta t$. 
Let $\bx_t^{(i)}$ denote the state at time $t\Delta t$ and $\theta_k$ be the current parameter, we compute the control $ u_t^{(i)}$ using the parameterized safety filter \eqref{eq:prelim-parameterized-filter} and  propagate the dynamics through a forward Euler discretization: $\bx_{t+1}^{(i)} = \bx_t^{(i)} + \Delta t \big(f(\bx_t^{(i)}) + g(\bx_t^{(i)})u_t^{(i)}\big)$.
The cost $\mathcal{L}(\theta)$ is then approximated as
\begin{align}
    \widehat{\Lc}(\theta):= \frac{1}{N} \sum_{i=1}^N \Big( 
    \phi(\bx_K^{(i)}) + \lambda \sum_{t=0}^{K-1} \psi(\bx_t^{(i)}) \Delta t
    \Big),
\end{align}
where $K = T / \Delta t$. Now, the mapping $\theta \mapsto \widehat{\mathcal{L}}(\theta)$ is defined through these closed-loop rollouts and an estimate of the gradient $\nabla \mathcal{L}(\theta)$ can be obtained by directly differentiating the mapping $\widehat{\mathcal{L}}(\theta)$.

Next, replacing $\nabla \mathcal{L}(\theta)$ with $\widehat{\nabla\mathcal{L}}(\theta)$ in \eqref{eq:rsgf} yields the following iterative algorithm 
\begin{align}\label{eq:estimate-rsgf}
\notag
    &\theta_{k+1} = \hat{p}(\theta_k) \\
    &\hat{p}(\theta) := \text{arg}\min\limits_{y\in\real^d} \widehat{\nabla \Lc}(\theta)^\top (y-\theta) + \frac{1}{2\delta}\norm{y-\theta}^2 \\
    \notag
    &\text{s.t.} \ \gamma \delta C_i(\theta) + \nabla C_i(\theta)^\top (y-\theta) + \frac{1}{2\delta}\norm{y-\theta}^2 \leq 0, \ i\in[q].
\end{align}

% \red{[ED: In the definition of the RSGF update, the current iterate and the optimization variable should be used consistently. In particular, all gradients and constraint values should be evaluated at the current iterate $\theta_k$, and the proximal terms should involve $(y-\theta_k)$ throughout. I think it is just a typo, but please check.]}

We note that~\eqref{eq:estimate-rsgf} is feasible for 
values of $\theta$ in the feasible set, since in that case $y=\theta$ is feasible for~\eqref{eq:estimate-rsgf}.
The following result characterizes the convergence and constraint satisfaction properties of~\eqref{eq:estimate-rsgf}.

\begin{proposition}\longthmtitle{Constraint satisfaction and convergence}\label{prop:constraint-satisfaction-convergence}
    Suppose that for each $\theta\in\real^d$, 
    $\norm{ \widehat{\nabla \Lc}(\theta) - \nabla \Lc(\theta) } \leq \epsilon$,
    for some $\epsilon > 0$.
    Further suppose that $\Lc$ is $L$-smooth and lower bounded by a constant $\Lc_{\text{min}}$,
    and $C_i$ is $L_i$-smooth for each $i\in[q]$.
    Assume also that 
    \begin{align}\label{eq:design-parameter-conditons}
        0 < \delta < \min\{ \frac{1}{\gamma}, \frac{1}{L}, \frac{1}{L_1}, \hdots, \frac{1}{L_q}  \}.
    \end{align}
    %
% \marginJC{Is there something we can say if we start infeasible? B/c we are only evaluating a few trajectories, then the cost is always finite, so RSGF will bring us back to feasible, no? I think that's an important point -- we actually use it later in the sims.}
    %
    %
    Then, if the algorithm is initialized with 
    $\theta_0\in\mathcal{F}:=\setdef{\theta\in\real^d}{C_i(\theta) \leq 0,\ \forall i\in[q]}$, the sequence of iterates generated by~\eqref{eq:estimate-rsgf} satisfies:
    \begin{enumerate}
        \item\label{it:safety} $\theta_k\in\mathcal{F}$ for all $k\in\mathbb{N}$;
        \item\label{it:convergence} Let $K^*\in\mathbb{N}$ be such that $K^* \geq \frac{L(\Lc(\theta_0)-\Lc_{\text{min}})}{4\epsilon^2}$. Then, there exists $k\in[K]$ such that $\norm{\hat{p}(\theta_k)-\theta_k} \leq \frac{4\epsilon}{L}$.
    \end{enumerate}
    Alternatively, if $c>0$ is such that~\eqref{eq:estimate-rsgf} is feasible in the set $\mathcal{F}_c:=\setdef{\theta\in\real^d}{C_i(\theta) \leq c, \ \forall i\in[q]}$ and $\theta_0\in\mathcal{F}_c$, then the sequence of iterates converges to $\mathcal{F}$.
\end{proposition}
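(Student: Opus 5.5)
The proof has three parts: feasibility, a descent inequality, and the infeasible start. For feasibility (item~\ref{it:safety}) I would argue by induction on $k$, using only the constraint structure of~\eqref{eq:estimate-rsgf}. Suppose $\theta_k\in\mathcal{F}$. As noted after~\eqref{eq:estimate-rsgf}, $y=\theta_k$ is feasible, so $\theta_{k+1}=\hat p(\theta_k)$ is well defined, since the objective is strongly convex and the feasible set is convex. By $L_i$-smoothness of $C_i$, the descent lemma gives
\[
C_i(\theta_{k+1}) \le C_i(\theta_k) + \nabla C_i(\theta_k)^\top(\theta_{k+1}-\theta_k) + \tfrac{L_i}{2}\norm{\theta_{k+1}-\theta_k}^2 .
\]
Since $\delta<1/L_i$, we have $L_i/2\le 1/(2\delta)$. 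The constraint of~\eqref{eq:estimate-rsgf} then yields $C_i(\theta_{k+1}) \le (1-\gamma\delta)C_i(\theta_k)$. Because $\gamma\delta<1$ and $C_i(\theta_k)\le 0$, this gives $C_i(\theta_{k+1})\le 0$. The same inequality gives the last claim. If $\theta_k\in\mathcal{F}_c$, then $\max\{C_i(\theta_{k+1}),0\}\le(1-\gamma\delta)\max\{C_i(\theta_k),0\}$. Hence $\mathcal{F}_c$ is invariant, the problem stays feasible by assumption, and the constraint violation decays geometrically, so $\mathrm{dist}$-type convergence to $\mathcal{F}$ follows in the sense that $\max_i C_i(\theta_k)^+\to 0$.

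For the convergence part (item~\ref{it:convergence}) I would write $d_k:=\theta_{k+1}-\theta_k$ and use the KKT conditions of the strongly convex subproblem. There exist $\mu_i\ge0$ with
\[
\widehat{\nabla\Lc}(\theta_k)+\tfrac1\delta d_k+\sum_i\mu_i(\nabla C_i(\theta_k)+\tfrac1\delta d_k)=0,
\]
together with complementary slackness. Taking the inner product with $d_k$ and using complementary slackness gives $\nabla C_i^\top d_k = -\gamma\delta C_i-\tfrac1{2\delta}\norm{d_k}^2$ on active constraints. Since $-\gamma\delta C_i\ge 0$ on $\mathcal{F}$, this yields
\[
\widehat{\nabla\Lc}(\theta_k)^\top d_k \le -\tfrac1\delta\norm{d_k}^2 .
\]
A quicker alternative is to compare the subproblem's optimal value with $y=\theta_k$, which gives $\widehat{\nabla\Lc}^\top d_k+\tfrac1{2\delta}\norm{d_k}^2\le0$. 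That version only yields $-\tfrac1{2\delta}\norm{d_k}^2$, which still suffices.

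Next I would apply $L$-smoothness of $\Lc$ and split $\nabla\Lc=\widehat{\nabla\Lc}+(\nabla\Lc-\widehat{\nabla\Lc})$:
\[
\Lc(\theta_{k+1})\le\Lc(\theta_k)-\big(\tfrac1{2\delta}-\tfrac L2\big)\norm{d_k}^2+\epsilon\norm{d_k}.
\]
If $\norm{d_k}>4\epsilon/L$ for all $k\le K^*$, then $\epsilon\norm{d_k}<\tfrac L4\norm{d_k}^2$, so each step decreases $\Lc$ by a positive multiple of $\norm{d_k}^2$, at least of order $\epsilon^2/L$. Summing over $k$ and using $\Lc\ge\Lc_{\min}$ then contradicts the choice of $K^*$.

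The main obstacle is the constant bookkeeping in this contradiction. The margin $\tfrac1{2\delta}-\tfrac L2$ degenerates as $\delta\to1/L$, so the per-step decrease may not reach the needed $4\epsilon^2/L$ for every admissible $\delta$ under~\eqref{eq:design-parameter-conditons}. I would first try the sharper KKT inequality $-\tfrac1\delta\norm{d_k}^2$. With it the coefficient becomes $\tfrac1\delta-\tfrac L2\ge\tfrac L2$, and the per-step decrease is at least $\tfrac L2\norm{d_k}^2-\epsilon\norm{d_k}>\tfrac L4\norm{d_k}^2>4\epsilon^2/L$. This gives exactly the stated $K^*$. Establishing that sharper inequality needs $\mu_i\ge0$ and care with the quadratic terms in the constraints. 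If it fails, I would settle for a $K^*$ with an adjusted constant.
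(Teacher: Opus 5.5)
Your overall structure follows the paper's proof. The descent lemma for $C_i$ combined with the subproblem constraint gives $C_i(\theta_{k+1})\le(1-\gamma\delta)C_i(\theta_k)$, which handles item~1 and the infeasible start. Item~2 then follows from $L$-smoothness of $\Lc$, comparing $\hat p(\theta_k)$ with $y=\theta_k$, Cauchy--Schwarz, and a telescoping contradiction. Your treatment of the last claim is slightly more complete than the paper's. You note that $\mathcal{F}_c$ is invariant, so the subproblem stays feasible at every iterate, and you use the factor $1-\gamma\delta$ to get geometric decay of $\max_i C_i(\theta_k)^+$. The paper only records strict decrease, which by itself would not force the violation to zero.

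The step to repair is the inequality $\widehat{\nabla\Lc}(\theta_k)^\top d_k\le-\frac1\delta\norm{d_k}^2$. As you correctly diagnose at the end, the stated constants need this sharper form.

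Your earlier claim that the value comparison ``still suffices'' is wrong, since it only gives $-\frac{1}{2\delta}\norm{d_k}^2$. For $\delta>\frac{2}{3L}$ the coefficient $\frac{1}{2\delta}-\frac L2$ drops below $\frac L4$. The bound then guarantees no descent at all when $\norm{d_k}$ is just above $\frac{4\epsilon}{L}$. So your fallback of adjusting the constant would not prove the statement as written.

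Your KKT derivation of the sharper bound is algebraically correct, but it presupposes that multipliers exist. Nothing in the hypotheses gives a constraint qualification for the subproblem. For example, a $C_i$ with $C_i(\theta_k)=0$ and $\nabla C_i(\theta_k)=0$ turns that constraint into $\norm{y-\theta_k}\le 0$; Slater then fails and multipliers generally do not exist.

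The paper avoids multipliers altogether. Each constraint of the subproblem is a closed ball in $y$, so the feasible set is convex. The minimizer of the convex objective therefore satisfies $(y-\hat p(\theta_k))^\top\big(\widehat{\nabla\Lc}(\theta_k)+\frac1\delta(\hat p(\theta_k)-\theta_k)\big)\ge 0$ for every feasible $y$. Taking $y=\theta_k$, which is feasible by item~1, gives exactly the sharper inequality.

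There are two other ways to close the gap:
\begin{enumerate}
\item Keep your value comparison and add the quadratic-growth term $\frac{1}{2\delta}\norm{d_k}^2$ that comes from $\frac1\delta$-strong convexity of the objective.
\item Stay with KKT, and note that two distinct points in an intersection of balls have their midpoint strictly inside every ball. Hence Slater holds unless the feasible set is $\{\theta_k\}$, in which case $d_k=0$ and the bound is trivial.
\end{enumerate}
With either fix, your argument gives $\Lc(\theta_{k+1})-\Lc(\theta_k)<-\frac{4\epsilon^2}{L}$ whenever $\norm{d_k}>\frac{4\epsilon}{L}$, exactly as in the paper.
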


\begin{proof}
    Let us first prove item~\ref{it:safety}, which follows an argument analogous to that of~\cite[Lemma 1 (i)]{PM-AM-JC:25-l4dc}.
    Let $i\in[q]$ and $k\in\mathbb{N}$.
    Since $C_i$ is $L_i$-smooth, by~\cite[Lemma 1.2.3]{YN:18} we have 
    \begin{align}\label{eq:Ci-Li-smoothness}
        C_i(\theta_{k+1}) \! \leq \! C_i(\theta_k) \! + \! \nabla C_i(\theta_k)^\top (\theta_{k+1} \! - \! \theta_k) \! + \! \frac{L_i}{2}\norm{\theta_{k+1}-\theta_k}^2.
    \end{align}
    Now, by definition, $\theta_{k+1}$ satisfies
    \begin{align}\label{eq:Ci-constraint}
        \gamma \delta C_i(\theta_k) + \nabla C_i(\theta_k)^\top (\theta_{k+1}-\theta_k) + \frac{1}{2\delta}\norm{\theta_{k+1}-\theta_k}^2 \leq 0.
    \end{align}
    By using~\eqref{eq:Ci-constraint} in~\eqref{eq:Ci-Li-smoothness} we get
    \begin{align}\label{eq:Ci-safety-conclusion}
        C_i(\theta_{k+1}) \leq (1-\gamma\delta) C_i(\theta_k) -\frac{1}{2}\Big( \frac{1}{\delta} - L_i \Big) \norm{\theta_{k+1}-\theta_k}^2.
    \end{align}
    Since $\delta < \frac{1}{\gamma}$ and $\delta < \frac{1}{L_i}$,~\eqref{eq:Ci-safety-conclusion} implies that if $C_i(\theta_k) \leq 0$, then $C_i(\theta_{k+1}) \leq 0$. By induction, and since this argument is valid for all $i\in[q]$, this implies that item~\ref{it:safety} holds.
    
    %
% \marginJC{I wouldn't mind spelling it out briefly, since we have space.}
    %
    Let us now show item~\ref{it:convergence}.
    Since $\Lc$ is $L$-smooth, we have 
    \begin{align*}
        \Lc(\theta_{k+1}) \leq \Lc(\theta_k) + \nabla \Lc(\theta_k)^\top (\theta_{k+1}-\theta_k) + \frac{L}{2}\norm{\theta_{k+1}-\theta_k}^2.
    \end{align*}
    By the optimality conditions of~\eqref{eq:estimate-rsgf} (cf.~\cite[Thm. 12.3]{JN-SW:06}), we have that any feasible $y\in\real^d$ satisfies 
    \begin{align*}
        (y-\hat{p}(\theta_k))^\top ( \widehat{\nabla \Lc}(\theta_k) + \frac{1}{\delta}( \hat{p}(\theta_k) - \theta_k ) ) \geq 0.
    \end{align*}
    %It is even more clearly stated in "Anytime solvers for variational inequalities: The (recursive) safe
    %monotone flow" Section 2.2, which shows that constrained optimization problems can be written as variational inequalities, for which the above inequality is the definition of optimality.
    In particular, $y = \theta_k$ is feasible (thanks to item~\ref{it:safety}, because $\theta_k\in\mathcal{F}$, which means that $C_i(\theta_k) \leq 0$ for all $i\in[q]$ and $k\in\mathbb{N}$) and yields
    \begin{align*}
        \Lc(\theta_{k+1}) - \Lc(\theta_k) &\leq (\nabla \Lc(\theta_k) - \widehat{\nabla \Lc}(\theta_k) )^\top (\theta_{k+1}-\theta_k) \\
        &-\Big( 
        \frac{1}{\delta}-\frac{L}{2}
        \Big) \norm{ \theta_{k+1}-\theta_k }^2.
    \end{align*}
    Now, after using the Cauchy-Schwartz inequality, the bound $\norm{ \widehat{\nabla\Lc}(\theta) - \nabla\Lc(\theta) } \leq \epsilon$, and $\delta < \frac{1}{L}$, we get that whenever $\norm{\hat{p}(\theta_k) - \theta_k} > \frac{4\epsilon}{L}$, 
    \begin{align}\label{eq:Lc-decrease}
        \Lc(\theta_{k+1}) - \Lc(\theta_k) < -\frac{4\epsilon^2}{L} < 0.
    \end{align}
    Now, since $\Lc \geq \Lc_{\text{min}}$, if $K^*\in\mathbb{N}$ was such that $K^* \geq \frac{(\Lc(\theta_0)-\Lc_{\text{min}}) L}{ 4\epsilon^2 }$ and
    $\norm{\hat{p}(\theta_k)-\theta_k} > \frac{4\epsilon}{L}$ for all $k\in[K^*]$, from~\eqref{eq:Lc-decrease}, we would get that $\Lc(\theta_{K^*+1}) < \Lc_{\text{min}}$, which is a contradiction.
    Therefore, there exists $k\in[K^*]$ such that $\norm{\hat{p}(\theta_k)-\theta_k} \leq \frac{4\epsilon}{L}$.

    Finally, let us show that if that~\eqref{eq:estimate-rsgf} is feasible in $\mathcal{F}_c$ and $\theta_0\in\mathcal{F}_c$, then the sequence of iterates converges to the feasible set of~\eqref{eq:constrained-optimization-problem}.
    %
% \marginJC{We say here in finite time, but in the next sentence, we say could be asymptptically. Finite time would be better. And if it is finite time, we should state it in the proposition.}
    %
    This follows from~\eqref{eq:Ci-safety-conclusion}, by noting that for any $i\in[q]$, if $C_i(\theta_k) > 0$, then $C_i(\theta_{k+1}) - C_i(\theta_k) < 0$, which implies that the sequence converges to the feasible set of~\eqref{eq:constrained-optimization-problem} (either asymptotically or in finite time).
\end{proof}

\vspace{0.1cm}

The smoothness assumptions in Proposition~\ref{prop:constraint-satisfaction-convergence} are standard in optimization, and hold if the parameter set is restricted to a compact set.
From~\eqref{eq:Ci-safety-conclusion}, we can see that the choice of $\delta$ and $\gamma$ affects the maximum rate with which the constraint values can approach the boundary of the safe set. Indeed, smaller values of $\delta$ decrease the second term in~\eqref{eq:Ci-safety-conclusion}, but increase the first term. Smaller values of $\{ L_i \}_{i=1}^q$ allow a wider range of values of $\delta$ satisfying~\eqref{eq:design-parameter-conditons}, which can help adjust such rate. Finally, smaller values of $L$ lead to faster convergence as per Proposition~\ref{prop:constraint-satisfaction-convergence} item~\ref{it:convergence}.

%
% \marginJC{How reasonable/unreasonable are the assumptions of the proposition? Also, can you comment on the ineq that involved $\delta$, $\gamma$, and the $L$'s? The 1st two are design parameters, so that inverse relationship must point to some tradeoff. Similarly, the relationship w/ the $L$'s means that smaller $L$'s give us more choice, meaning?}
%

Importantly, Proposition~\ref{prop:constraint-satisfaction-convergence} in the context of problem~\eqref{eq:non-convex-reformulation} ensures that if the iterates are initialized with a stabilizing nominal controller, then all iterates remain stabilizing, and if it is initialized with a controller that is \textit{close} to being stabilizing, the iterates will converge to a stabilizing one (provided that the problem remains feasible).
In particular, since obtaining a small approximation error $\epsilon$ requires sampling a high number of trajectories, and the property that unstable controllers converge to stable ones is independent of $\epsilon$, 
if the initial nominal controller is not stabilizing, one does not need to sample a high number of trajectories and can instead wait for the parameter iterates to become stabilizing to start sampling a high number of trajectories and eventually lead to a better approximation.

We note that as the error bound $\epsilon$ between 
$\widehat{\nabla \Lc}(\theta)$ and
$\nabla \Lc(\theta)$ vanishes,
the values of $\hat{p}(\theta)$ and $p(\theta)$ coincide 
(under standard constraint qualification conditions on~\eqref{eq:constrained-optimization-problem} cf.~\cite[Thm 5.3]{AVF-JK:85}).
%MFCQ
Since the KKT points of~\eqref{eq:constrained-optimization-problem} can be characterized as the values of $\theta$ for which $\theta = p(\theta)$ (cf.~\cite[Lemma 1]{PM-AM-JC:25-l4dc}), Proposition~\ref{prop:constraint-satisfaction-convergence} item~\ref{it:convergence} ensures that for small $\epsilon$, there exists $k\in[K]$ such that $\theta_k$ lies in a neighborhood of a KKT point of~\eqref{eq:constrained-optimization-problem}.
%
% \marginJC{Here, we discuss importance of Prop, item 2. We should emphasize (even if briefly) the importance of Prop, item 1, in regards to the challenge we discussed at the end of Section IV.}
%

%
% \marginJC{If we discuss this "recover from unstable" feature that the algo here has, we can add (have a remark?) a discussion about how, at the beginning, when we start infeasible, it's not that big of a deal to be using fewer trajectories to approximate (e.g., bad $\epsilon$). Once we're feasible (stable), then one can progressively incorporate more and more trajectories to reduce error (smaller $\epsilon$). Potentially there are other results to prove, essentially connecting $\epsilon$ to the number of trajectories, etc.}
%

We summarize the overall training procedure in Algorithm~\ref{alg:training}.

\begin{algorithm}[tbh]
\caption{Training via Robust Safe Gradient Flow with Trajectory Sampling}
\label{alg:training}
\begin{algorithmic}[1]

\STATE Initialize parameters $\theta_0$
\FOR{$k = 0,1,\dots$}

\STATE Sample $N$ initial conditions $\{\bx_0^{(i)}\}_{i=1}^N \sim \xi$

\FOR{each $i=1,\dots,N$}
    \STATE Set $\bx^{(i)}_0 = \bx_0^{(i)}$
    \FOR{$t = 0,\dots,K-1$}
        \STATE Compute control $ u_t^{(i)}$ via %parameterized safety filter 
        \eqref{eq:prelim-parameterized-filter} with current state  $ x_t^{(i)}$ and parameter $\theta_k$
        \STATE Propagate dynamics:
        \[
        \bx_{t+1}^{(i)} = \bx_t^{(i)} + \Delta t \big(f(\bx_t^{(i)}) + g(\bx_t^{(i)})u_t^{(i)}\big)
        \]
    \ENDFOR
\ENDFOR

\STATE Compute approximated loss $\widehat{\mathcal{L}}(\theta_k)$ and gradient $\widehat{\nabla \mathcal{L}}(\theta_k)$

\STATE Update parameters:
$\theta_{k+1} = \hat{p}(\theta_k)
\quad \text{(cf.~\eqref{eq:estimate-rsgf})}$
\ENDFOR

\end{algorithmic}
\end{algorithm}
\vspace{-0.4cm}
%
% \marginJC{ALgorithm steps have several typos/inconsistencies with main text: $\eta$ was $\xi$ in text, (18) is $\hat{p}$, not $p$,...}
% %

\subsection{Multiple obstacles}\label{sec:multiple-obs}

The proposed framework naturally extends to the case of multiple obstacles. 
Suppose the safe set is defined as the intersection
\[
\Cc = \bigcap_{j=1}^M \Cc_j, 
\quad 
\Cc_j = \{\bx \in \real^n \mid h_j(\bx) \ge 0\}.
\]

A standard approach is to enforce the CBF condition for each barrier function $h_j$ in the safety filter~\eqref{eq:prelim-parameterized-filter}, which leads to a quadratic program with multiple constraints. While this formulation is conceptually straightforward, it becomes computationally demanding in the trajectory-based training setting considered here. Indeed, each rollout requires solving a quadratic program at every time step, and over a time horizon $T$ with discretization step $\Delta t$, a single rollout involves $T/\Delta t$ solves. When $N$ trajectories are sampled to approximate $\mathcal{L}(\theta)$, this results in solving a large number of quadratic programs.  Moreover, in the presence of multiple constraints, the control input is defined implicitly as the optimizer of a constrained quadratic program. As a result, computing the gradient of $\widehat{\mathcal{L}}(\theta)$ with respect to $\theta$ requires differentiating through the solution map of this optimization problem, which further increases computational cost and complexity.

To address these issues, we adopt the same log-sum-exp relaxation as in \cite{TGM-ADA:23} to combine all the CBFs. Specifically, we define 
$    \tilde{h}_\beta(\bx) 
    = -\frac{1}{\beta} \log \Big( \sum_{j=1}^M \exp(-\beta h_j(\bx)) \Big),
$ 
where $\beta > 0$ is a parameter. The following result characterizes this relaxation.

% \red{[ED: is (21) the correct relaxation? It seems to ne that it should be 
% $$
% -\frac{1}{\beta} \log \left( \sum_{j=1}^M \exp(-\beta h_j(\bx)) \right).
% $$
% I think the following: since
% \[
% \Cc = \bigcap_{j=1}^M \{\bx \in \real^n \mid h_j(\bx) \ge 0\}
%     = \{\bx \in \real^n \mid \min_{j\in[M]} h_j(\bx) \ge 0\},
% \]
% the aggregated barrier function must approximate the pointwise minimum of the functions $\{h_j\}_{j=1}^M$. To this end, we use the standard log-sum-exp approximation
% \[
% \tilde{h}_\beta(\bx)
% =
% -\frac{1}{\beta}\log\!\left(\sum_{j=1}^M e^{-\beta h_j(\bx)}\right),
% \qquad \beta>0,
% \]
% which satisfies $\tilde{h}_\beta(\bx)\to \min_{j\in[M]} h_j(\bx)$ as $\beta\to\infty$. Indeed, by the classical bounds for the smooth minimum,
% \[
% \min_{j\in[M]} h_j(\bx) - \frac{\log M}{\beta}
% \le
% \tilde{h}_\beta(\bx)
% \le
% \min_{j\in[M]} h_j(\bx).
% \]
% Therefore, the superlevel set of $\tilde h_\beta$ provides an inner approximation of the safe set $\Cc$. 

% Am I mistaken? 
% ]}

% \red{[ED: is the lemma correct? Do we need the minus sign?]}

\vspace{0.1cm}

\begin{lemma}[{\cite[p.~72]{SB-LV:09}}]
\label{lem:log-sum-exp-bounds}
For any $a_1,\dots,a_M \in \real$ and $\beta > 0$, define
$
\phi_{\beta}(a_1,\dots,a_M)
:=
-\frac{1}{\beta}\log\Big( \sum_{j=1}^M \exp(-\beta a_j) \Big).
$
%
% \marginJC{If we use exp above, better to use exp here too.}
%
It holds that
\[
\min_{j\in[M]} a_j - \frac{\log M}{\beta}
\;\le\;
\phi_{\beta}(a_1,\dots,a_M)
\;\le\;
\min_{j\in[M]} a_j. \quad \Box
\]
\end{lemma}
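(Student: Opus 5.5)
The plan is to sandwich the sum $S:=\sum_{j=1}^M \exp(-\beta a_j)$ between two multiples of its largest term and then push the bounds through $-\frac{1}{\beta}\log(\cdot)$. Set $a^*:=\min_{j\in[M]} a_j$. Since $\beta>0$, the map $s\mapsto \exp(-\beta s)$ is strictly decreasing. Hence the largest summand in $S$ is $\exp(-\beta a^*)$, and every summand satisfies $0<\exp(-\beta a_j)\le \exp(-\beta a^*)$.

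First, $S$ contains the term $\exp(-\beta a^*)$ and all other terms are positive, so
\[
\exp(-\beta a^*)\le S.
\]
Second, each of the $M$ terms is at most $\exp(-\beta a^*)$, so
\[
S\le M\exp(-\beta a^*).
\]

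Next, take logarithms; $\log$ is increasing on $(0,\infty)$ and $S>0$, so the chain is preserved:
\[
-\beta a^*\le \log S\le \log M-\beta a^*.
\]
Multiplying by $-\frac{1}{\beta}<0$ reverses both inequalities. This gives
\[
a^*-\frac{\log M}{\beta}\le \phi_\beta(a_1,\dots,a_M)\le a^*,
\]
which is the claim.

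There is no genuine obstacle in this argument. The only points needing care are the sign flip, which is harmless because $\beta>0$, and the positivity of $S$, which ensures the logarithm is defined.
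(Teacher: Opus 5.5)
Your proof is correct. The paper gives no proof of its own and simply cites Boyd--Vandenberghe, where the bound $\max_j x_j \le \log\sum_{j} e^{x_j} \le \max_j x_j + \log M$ rests on the same sandwich $e^{\max_j x_j} \le \sum_j e^{x_j} \le M e^{\max_j x_j}$; your argument is this with $x_j = -\beta a_j$, and you handle the division by $-\beta$ correctly.
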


\vspace{0.1cm}

By Lemma~\ref{lem:log-sum-exp-bounds}, the set 
\[
\hat{\Cc} := \{\bx \in \real^n \mid \tilde{h}_\beta(\bx) \ge 0\}
\]
satisfies $\hat{\Cc} \subseteq \Cc$. Furthermore, as $\beta \to \infty$, $\tilde{h}_\beta(\bx)$ converges to $\min_j h_j(\bx)$, and hence $\hat{\Cc}$ approaches $\Cc$.

Using $\tilde{h}_\beta$ in place of the individual barrier functions results in a safety filter with a single constraint. This yields several important benefits. First, the number of constraints in the safety filter no longer scales with the number of obstacles, which significantly reduces the computational burden during trajectory rollouts. In particular, the control input can be computed without solving a multi-constraint quadratic program at each time step. Second, from~\eqref{eq:v-expression}, the resulting control input admits a closed-form expression. This eliminates the need to solve an optimization problem online and substantially reduces the overall computational cost. Finally, the closed-form expression makes the dependence of the control input on $\theta$ explicit. This avoids differentiating through the solution map of a quadratic program and simplifies the computation of gradients of $\widehat{\mathcal{L}}(\theta)$ with respect to~$\theta$.

% In summary, the log-sum-exp approximation provides a tractable surrogate for multiple barrier constraints, trading a controllable approximation error for improved computational efficiency and differentiability.

\section{Numerical Experiments}

In this section, we illustrate the performance of the proposed approach on several planar examples. 
We consider a single integrator system
\[
\dot{\bx} = A\bx + B\bu=\bu,
\]
with $A = 0$ and $B = I_2$, and use linear nominal controllers of the form $\bu = -K\bx$. 
The safety filter is constructed using the parameterized CBF formulation as in \eqref{eq:prelim-parameterized-filter}.
% \marginED{Perhaps, write the model for the  integrator and then $\dot{\bx} = I_2\bu,$? }

In all experiments, the trajectories are generated using a forward Euler discretization with step size $\Delta t = 0.02$. 
The cost function is chosen as
\[
\mathcal{L}(\theta,\bx_0) = \|\bx(T)\|_2^2 + \lambda \int_0^T \|\bx(t)\|_2^2 dt,
\]
with $\lambda = 0.1$. 
The expectation in~\eqref{eq:prelim-cost} is approximated using Monte Carlo sampling with batch size $512$, where initial conditions are sampled uniformly from the safe set.

% \marginJC{Plots are great. Could we also have markers (e.g., red stars) for the equilibria in each plot? (in these fig and the others) That'd make the discussion in the caption even easier to visually verify.}
%
% \marginJC{Do we want to also mark $0$? But differently from the others (e.g., black star).}

For each experiment, we first choose an initial nominal gain $K_0$ and then construct $(Q_0,Y_0)$ as follows. 
If the matrix $A-BK_0$ is Hurwitz, we let $P_0$ be the positive definite solution of the Lyapunov equation
\[
(A-BK_0)^\top P_0 + P_0(A-BK_0) = -I,
\]
and define
\[
Q_0 = P_0^{-1},
\qquad
Y_0 = K_0 Q_0.
\]
If the matrix $A-BK_0$ is unstable, we simply take $Q_0 = I_2$ and $Y_0 = K_0$. 
%
% \marginJC{This points to my margin in the proposition above regarding what happens if we start infeasible!}
%
In all experiments, the parameter $\alpha$ is initialized as $\alpha_0 = 5$.

% \marginED{Let's make the figures larger. We have 8 pages.}

\subsection{Invariance in a Bounded Safe Set}

We first consider a bounded safe set defined as a disk,
\[
\Cc = \{\bx : r^2 - \|\bx-\bx_c\|^2 \geq 0\},
\]
and the initial conditions are chosen uniformly on a circle inside the disk. 
By \cite{PM-YC-EDA-JC:25-jnls},
%
% \marginJC{Would be nice to refer to specific results in this section when we mention obstructions: like \cite[Theorem~BLA]{PM-YC-EDA-JC:25-jnls}. Same margin for other subsections here.}
% %
in this setting, the best scenario that one can expect is the absence of undesired equilibria, and that $\Cc$ coincides with the region of attraction of the origin~\cite[Proposition 4.10]{PM-YC-EDA-JC:25-jnls}.

In Fig.~\ref{fig:case1} (left), we show the trajectories obtained with the safety filter applied to the initial controller. 
There are two undesired equilibria located on the boundary of $\Cc$, one of which is asymptotically stable and has a region of attraction with positive measure.  In Fig.~\ref{fig:case1} (right), we show the trajectories obtained after training. No undesired equilibrium is observed in this case. All trajectories remain in $\Cc$ and converge to the origin.

\begin{figure}[htb]
    \centering
    \includegraphics[width=0.49\textwidth]{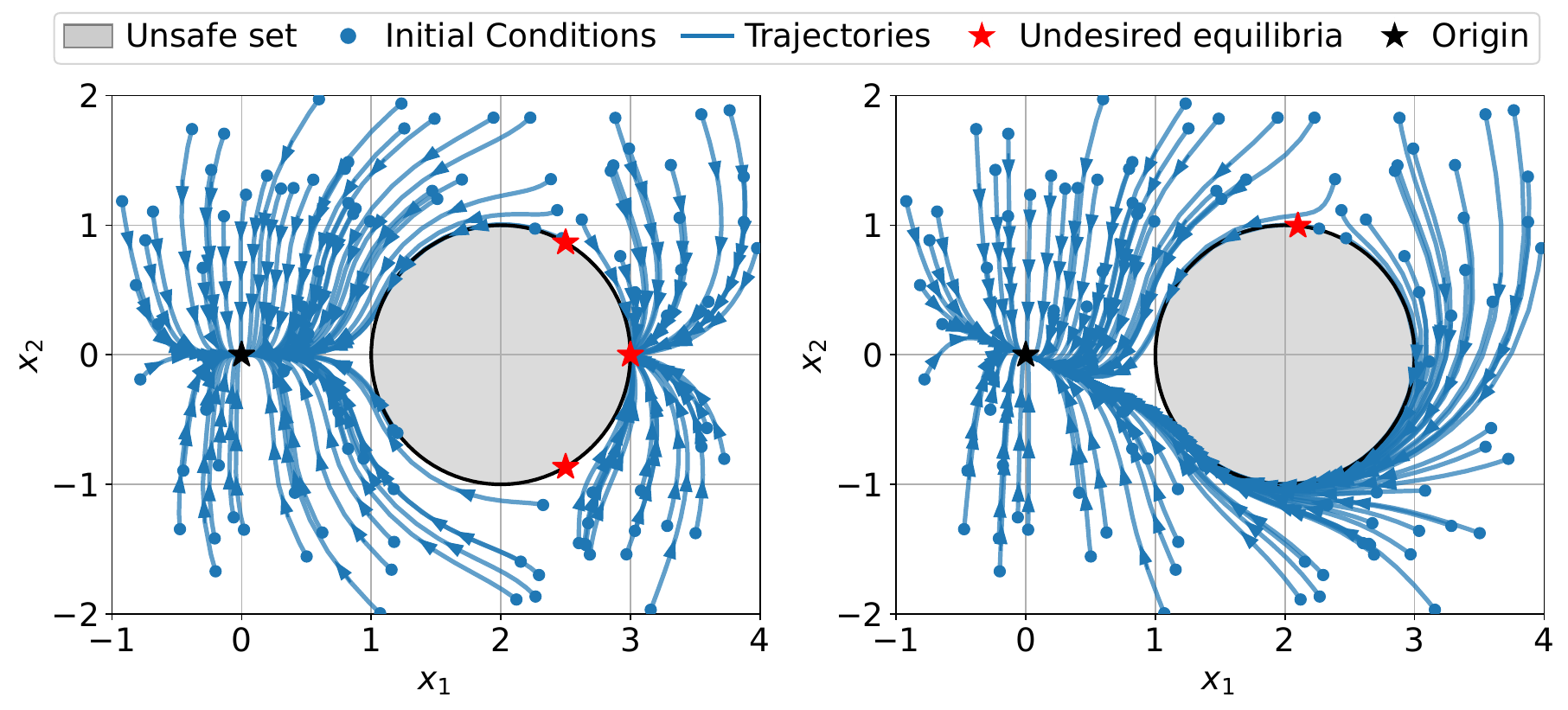}
    \vspace{-0.3cm}
    \caption{State trajectories generated by the initial (left) and trained (right) controllers. Gray regions indicate unsafe sets. Under the initial controller, two undesired equilibria appear on the boundary of the safe set, one of which is asymptotically stable and its region of attraction has a positive measure. After training, no undesired equilibrium is observed, and all trajectories remain within the safe set and converge to the origin. }
    \vspace{-0.3cm}
    \label{fig:case1}
\end{figure}

\begin{figure}[htb]
    \centering    \includegraphics[width=0.49\textwidth]{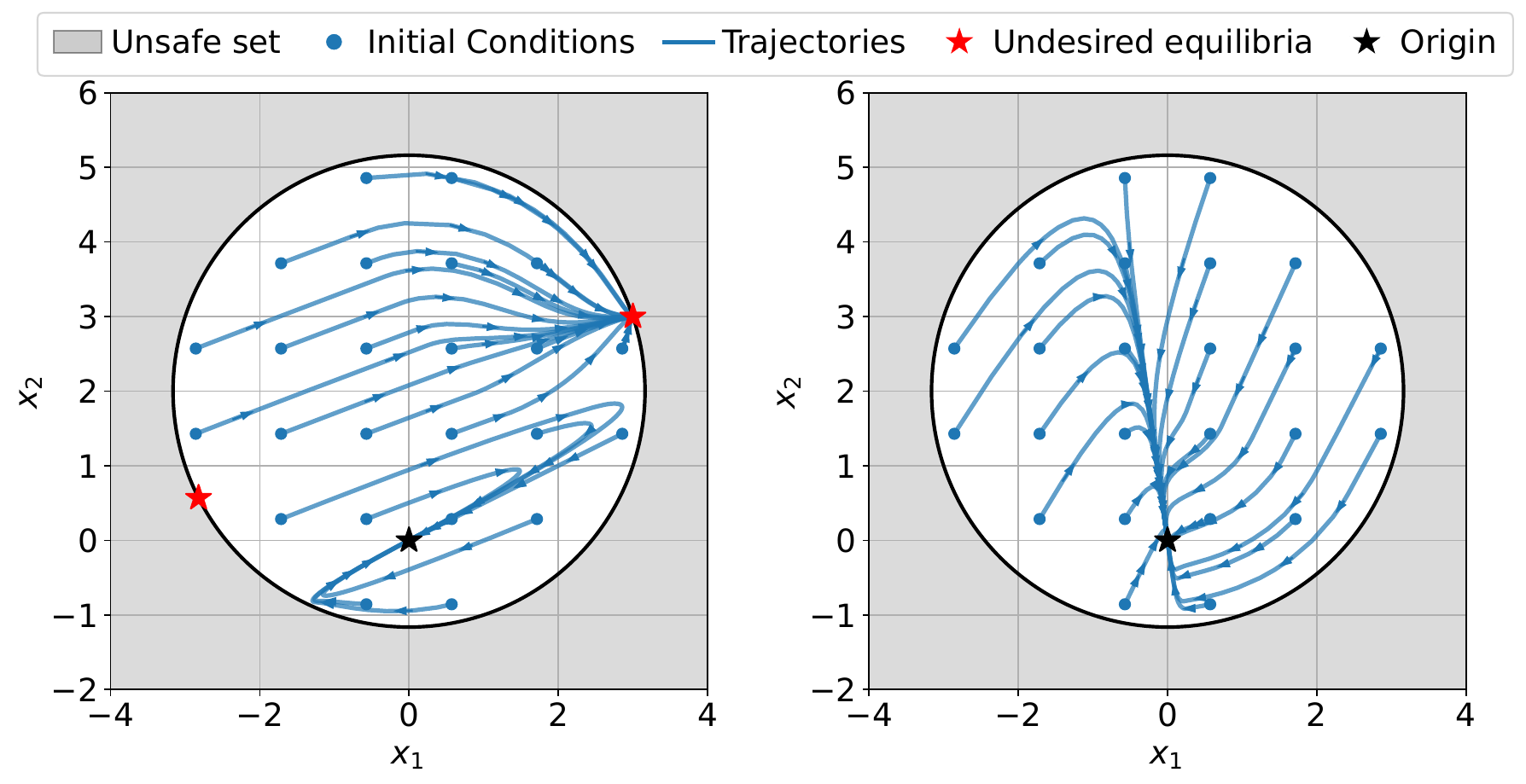}
    \vspace{-0.3cm}
    \caption{State trajectories generated by the initial (left) and trained (right) controllers. 
Gray regions indicate unsafe sets. The initial controller induces an asymptotically stable undesirable equilibrium on the boundary of the obstacle, causing some trajectories to converge to the unsafe set. After training, the asymptotically stable undesirable equilibrium is eliminated, and the resulting controller keeps all trajectories within the safe set while yielding improved convergence behavior.}
    \label{fig:case2}
\end{figure}

\subsection{Avoidance of a Single Obstacle}

We next consider a single circular obstacle centered at $\bx_c$ with radius $r$, and define the safe set as
\[
\Cc = \{\bx : 
r^2 - \|\bx-\bx_c\|^2 \leq 0
% \|\bx - \bx_c\|^2 - r^2 \geq 0
\}.
\]
The initial conditions are chosen uniformly on a circle enclosing the obstacle. 
By \cite{chen2024characterization} and~\cite[Proposition 4.9]{PM-YC-EDA-JC:25-jnls}, for linear nominal controllers with safety filter, we can not prevent the emergence of undesired equilibria, and the best scenario that one can expect is the presence of a single unstable undesired equilibrium.

In Fig.~\ref{fig:case2} (left), we show the trajectories obtained with the safety filter applied to the initial controller. 
There is one undesired equilibrium located on the boundary of the obstacle, which is asymptotically stable and has a region of attraction with positive measure. The trajectories initialized in this region converge to this equilibrium, while all remaining trajectories converge to the origin. After training, as shown in Fig.~\ref{fig:case2} (right), all trajectories remain in $\Cc$ and converge to the origin. 
This behavior indicates the absence of asymptotically stable undesired equilibria. 
Moreover, the trajectory pattern suggests the presence of a single undesired equilibrium on the boundary, which is unstable, in agreement with \cite{chen2024characterization}.

\subsection{Obstacle Avoidance with Multiple Constraints}
% \begin{figure}
%     \centering
%     \includegraphics[width=0.45\linewidth]{Figures/figure5.png}
%     \includegraphics[width=0.45\linewidth]{Figures/figure6.png}
%     \caption{Closed-loop trajectories with multiple obstacles. Left: initial controller. Right: learned controller.}
% \end{figure}

\begin{figure*}[htb]
    \centering
    \includegraphics[width=0.75\textwidth]{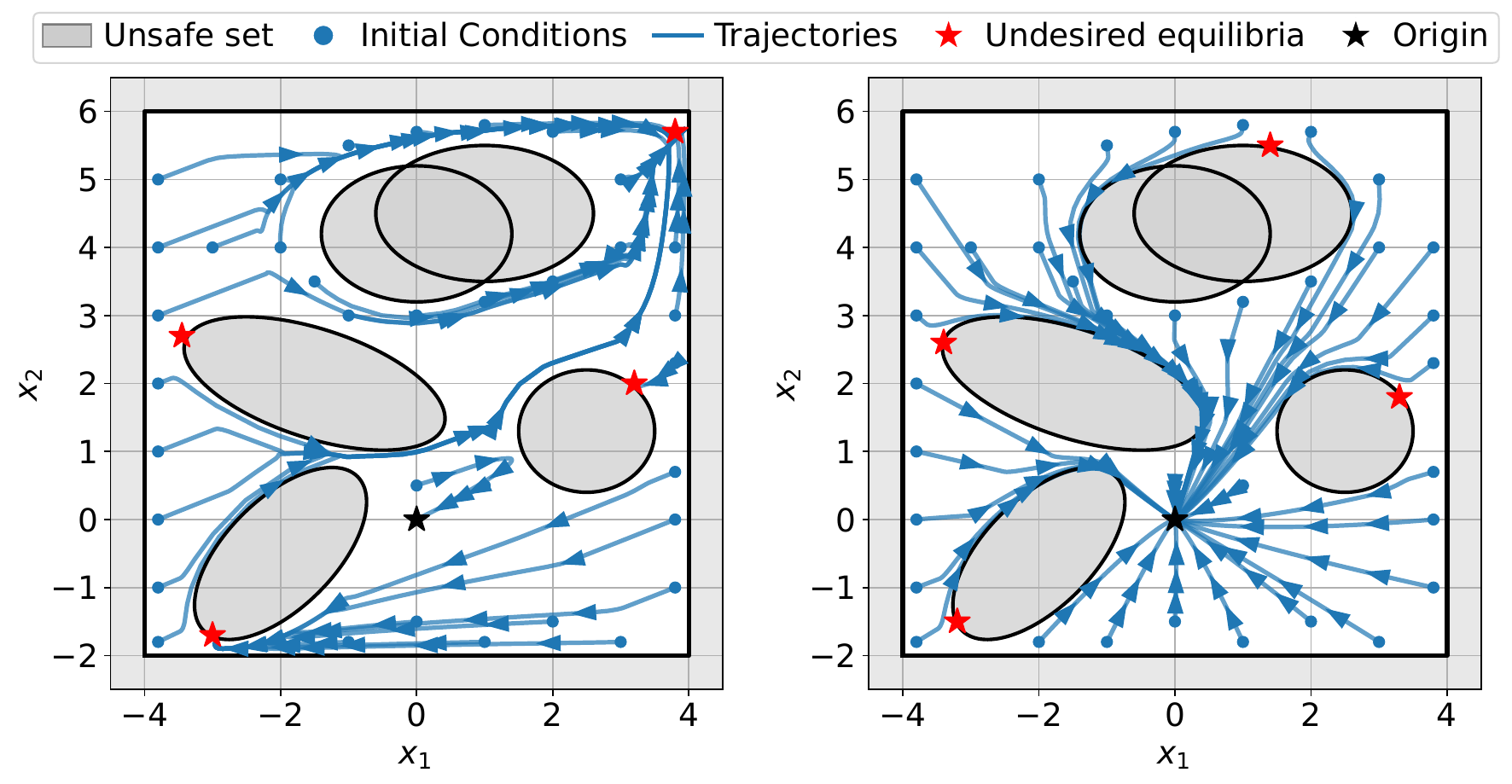}
    \vspace{-0.3cm}
    \caption{State trajectories generated by the initial (left) and trained (right) controllers. 
Gray regions indicate unsafe sets. Given a same set of initial conditions, the trajectories under the trained controller 
converge to the origin while ensuring obstacle avoidance, whereas many trajectories under the initial controller converge to undesirable equilibria located near the top-right corner and on the boundaries of the ellipsoid obstacles.}
\vspace{-0.3cm}
    \label{fig:case3}
\end{figure*}

Finally, we consider a scenario with multiple obstacles. 
The safe set is defined as the intersection of several constraints, including multiple rotated elliptical obstacles and box constraints defining the workspace. 
Specifically, the safe set is given by
\begin{align*} 
\Cc = \bigcap_{j=1}^{5} \{\bx : (\bx - \bc_j)^\top R_j^\top D_j R_j (\bx - \bc_j) - 1 \ge 0\}
\;\bigcap \;\\
\{\bx :  x_{\min} \le x_1 \leq x_{\max} ,\;
y_{\min} \le x_2 \leq y_{\max}\},
\end{align*}
where $\bc_j \in \real^2$ is the center of the ellipse, 
$R_j \in \real^{2\times 2}$ is a rotation matrix, and 
$D_j = \mathrm{diag}(1/a_j^2,\, 1/b_j^2)$ defines the shape of the ellipse with semi-axes $a_j$ and $b_j$.
To reduce the computational complexity of the safety filter, we use the log-sum-exp approximation of Section~\ref{sec:multiple-obs} to construct a single barrier function, with smoothing parameter $\beta = 5$. 
The initial conditions are sampled uniformly from the safe region.

Fig.~\ref{fig:case3} (left) shows the trajectories obtained with the safety filter applied to the initial controller. Three asymptotically stable undesired equilibria and one unstable undesired equilibrium are observed. 
The asymptotically stable undesired equilibria are located near the top-right corner, as well as on the boundaries of the bottom-left and middle-right obstacles.  After training, as shown in Fig.~\ref{fig:case3} (right), 
 undesired equilibria still exist, but all of them are unstable.
All trajectories corresponding to the sampled initial conditions remain within $\Cc$ and converge to the origin. This result demonstrates that the proposed approach effectively eliminates stable undesired equilibria in complex multi-obstacle environments, while preserving safety and ensuring convergence to the desired equilibrium.

% ancla
\section{Conclusions}
This paper studied the design of safety-filtered controllers with improved dynamical properties for control-affine systems. 
We proposed a policy optimization framework that jointly parameterizes the nominal controller and the safety filter, and optimizes them using trajectory-based objectives while enforcing Lyapunov-based stability conditions. To ensure that the stability conditions are satisfied throughout the learning process, we incorporate robust safe gradient flow with estimated policy gradient. 
The resulting method enables systematic exploration of controller parameters while maintaining both safety and stability guarantees. The numerical results demonstrate that the proposed approach effectively reduces undesirable equilibria and improves convergence properties, while preserving forward invariance of the safe set. 
Future work will focus on extending the proposed framework to more general nonlinear systems and on the design of parameterized nonlinear nominal controllers.

%
% \marginJC{Update [3], is "to appear"!}
%
\bibliography{alias,Main-add,JC,reference}
\bibliographystyle{IEEEtran}

\end{document}